\titleformat*{\section}{\bf\large\center\uppercase} 
\theoremstyle{definition}
\newtheorem{theorem}{Theorem}
\newtheorem{lemma}{Lemma}
\newtheorem{example}{Example}
\apptocmd{\sloppy}{\hbadness 10000\relax}{}{} 
\begin{document}
\doublespacing
\title{\bf Improved Neymanian analysis for $2^K$ factorial designs with binary outcomes}
\date{\today}
\author[1]{Jiannan Lu\thanks{Address for correspondence: Jiannan Lu, One Microsoft Way, Redmond, Washington 98052-6399, U.S.A. Email: \texttt{jiannl@microsoft.com}}}
\affil[1]{Analysis and Experimentation, Microsoft Corporation}
\maketitle
\begin{abstract}
$2^K$ factorial designs are widely adopted by statisticians and the broader scientific community. In this short note, under the potential outcomes framework \citep{Neyman:1923, Rubin:1974}, we adopt the partial identification approach and derive the sharp lower bound of the sampling variance of the estimated factorial effects, which leads to an ``improved'' Neymanian variance estimator that mitigates the over-estimation issue suffered by the classic Neymanian variance estimator by \cite{Dasgupta:2015}. 
\end{abstract}
\textbf{Keywords:} Partial identification; potential outcome; randomization; robust inference

\section{Introduction}
\label{sec:intro}

Originally introduced for agricultural research at the famous Rothamsted Experimental Station more than a century ago \citep{Fisher:1926, Yates:1963}, randomized controlled factorial designs \citep{Fisher:1935} have been widely adopted by researchers in social, behavior and biomedical sciences to simultaneously assess the main and interactive effects of multiple treatment factors. In applied research, a frequently encountered scenario is where not only the treatments but also the outcomes of interests are binary. For example, \cite{Nair:2008} explored how differentiating the format of a computer-aided encouragement program (e.g., single session vs. multiple occurrences, personalized vs. more general feedback and advice) affected the abstinence from smoking. \cite{Stampfer:1985} investigated whether aspirin and $\beta-$carotene could help prevent cardiovascular mortality. For such studies, to guarantee trustworthy discovery and reporting of causal effects that are scientifically meaningful, it is imperative to adopt an interpretable and robust methodology for estimation and inference. 

During recent years, the potential outcomes framework \citep{Neyman:1923, Rubin:1974, Rubin:1990} has become increasingly popular, because it enjoys clear interpretability (causal effects are defined as comparisons between potential outcomes under different treatments), and can be flexibly combined with various inferential procedures (e.g., Fisherian, Neymanian and Bayesian; see \cite{Ding:2018} for a comprehensive review). Realizing the salient features of the potential outcomes framework, \cite{Dasgupta:2015} extended it to $2^K$ factorial designs, and claimed that the proposed Neymanian causal inference framework ``results in better understanding of the estimands and allows greater flexibility in statistical inference of factorial effects, compared to the commonly used linear model based approach.'' However, as acknowledged by the causal inference literature \citep[e.g.,][Section~6.5]{Aronow:2014, Ding:2017, Imbens:2015}, a long-standing and fundamental challenge faced by the Neymanian framework is the over-estimation of the sampling variances of the estimated factorial effects, because we cannot jointly observe the potential outcomes under different treatments, and therefore directly identify the strengths of association between them. This missing data problem is sometimes referred to as the ``fundamental problem of causal inference'' \citep[e.g.,][Section~1.3]{Holland:1986, Imbens:2015}.

Among the numerous proposals that mitigate the variance over-estimation of the Neymanian causal inference framework, one solution that completely preserves the randomization-based ``flavor'' is the partial identification approach \citep[c.f.][]{Richardson:2014}, which is widely employed by both statisticians \citep[e.g.,][]{Cheng:2006, Zhang:2003, Aronow:2014, Ding:2016, Lu:2018} and econometricians \cite[e.g.,][]{Fan:2010}. The key idea of partial identification, in the context of factorial designs, is that although we cannot directly identify the sampling variances of the estimated factorial effects, we can derive their sharp lower bounds which are identifiable from observed data, which leads to an ``improved'' Neymanian variance estimator that guarantees better performance, regardless of the underlying dependency structure of the potential outcomes (however, the extent of performance improvement depends on the dependency structure). Along this line of research, \cite{Ding:2016} solved the problem for treatment-control studies (i.e., $2^1$ factorial designs), and in a recent paper \cite{Lu:2017} proposed the said ``improved'' variance estimator for $2^2$ factorial designs. Nevertheless, we still need a unifying framework applicable to general $2^K$ factorial designs, which to the best of our knowledge is lacking from the existing literature. From a theoretical perspective, it seems non-trivial to generalize the main results in \cite{Lu:2017} to arbitrary $2^K$ factorial designs, because the complexity of the dependency structure of the potential outcomes grows exponentially as $K$ increases. From a practical perspective, although $2^2$ factorial designs seem common in applied research, high-order factorial designs were also frequently employed \citep{Berkowitz:1964, Kim:2008, Yuan:2008} (e.g., to screen a large number of candidate treatment factors). In this paper, we fill this theoretical gap by deriving the desired ``improved'' Neymanian variance estimator, for arbitrary $2^K$ factorial designs.

We organize the remainder of the paper as follows. Section \ref{sec:review} reviews \cite{Dasgupta:2015}'s Neymanian inference framework for $2^K$ factorial designs, with a primary focus on binary outcomes. Section \ref{sec:theory} first highlights the variance over-estimation issue that the Neymanian causal inference framework suffers, and presents the ``improved'' Neymanian variance estimator that is guaranteed to be less biased than the standard Neymanian estimator. Section \ref{sec:conclusion} concludes with a discussion.

\section{Neymanian inference for factorial designs}
\label{sec:review}

\subsection{Factorial designs}

We adapt some materials from \cite{Dasgupta:2015} and \cite{Lu:2016a, Lu:2016b} to review the Neymanian causal inference framework for $2^K$ factorial designs. To maintain consistency, we inherit the set of notations from \cite{Lu:2017}. 

Consider $K(\ge 1)$ distinct treatment factors with two-levels -1 (placebo) and 1 (active treatment), resulting a total number of $J=2^K$ treatment combinations, labelled as
$
\bm z_1, \ldots, \bm z_J.
$
Their definitions depend on the $J \times J$ model matrix $\bm H = (\bm h_0, \ldots, \bm h_{J-1})$ \citep[c.f.][]{Wu:2009}, constructed as follows: 
\begin{enumerate}

\item Let $\bm h_0 = \bm 1_J;$ 

\item For $k=1,\ldots,K$, construct $\bm h_k$ by letting its first $2^{K-k}$ entries be -1, the next $2^{K-k}$ entries be 1, and repeating $2^{k-1}$ times; 

\item If $K \ge 2,$ order all subsets of $\{1, \ldots, K\}$ with at least two elements, first by cardinality and then lexicography. For $k = 1, \ldots J-1-K,$ let $\sigma_k$ be the $k$th subset and $\bm h_{K+k} = \prod_{l \in \sigma_k} \bm h_l,$ where ``$\prod$'' stands for entry-wise product. 

\end{enumerate}
Given the resulting design matrix $\bm H,$ the $j$th row of the corresponding sub-matrix $(\bm h_1, \ldots, \bm h_K)$ is the $j$th treatment combination $\bm z_j,$ for $j = 1, \ldots, J.$ In the next section, we will use $(\bm h_1, \ldots, \bm h_J)$ to define the (main and interactive) factorial effects. 

\subsection{Potential outcomes and factorial effects}

Consider $2^K$ factorial designs with $N(\ge 2^{K+1})$ experimental units. Under the Stable Unit Treatment Value Assumption \citep[SUTVA,][]{Rubin:1980}, for all $i = 1, \ldots, N,$ we let $Y_i(\bm z_j)  \in \{0, 1\}$ be the potential outcome of unit $i$ under treatment $\bm z_j,$ and
$
\bm Y_i = \{ Y_i(\bm z_1), \ldots, Y_i(\bm z_J) \}^\prime.
$
To simplify future notations, for
$
j = 1, \ldots, J,
$
let
$
N_j 
= \sum_{i=1}^N 1_{\left\{Y_i(\bm z_{j}) = 1\right\}}
$
denote the umber of experimentation units with potential outcomes equal to one under treatment $\bm z_j.$ Similarly, for all
$
j, j^\prime = 1, \ldots, J
$
and
$
j \ne j^\prime,
$
let
$
N_{jj^\prime} 
= \sum_{i=1}^N 1_{\left\{Y_i(\bm z_j) = 1, Y_i(\bm z_{j^\prime}) = 1\right\}}
$
denote the umber of experimentation units with potential outcomes equal to one under both $\bm z_j$ and $\bm z_{j^\prime}.$ In other words, $N_j$'s and $N_{j, j^\prime}$'s determine the marginal and pair-wisely joint distributions of the potential outcomes. We will use this set of notations frequently going forward.

For all $j=1, \ldots, J,$ the average potential outcome for $\bm z_j$ is
$
p_j = N_j / N,
$
and
$
\bm p = ( p_1, \ldots, p_J )^\prime.
$
For all $l=1, \ldots, J-1,$ \cite{Dasgupta:2015} defined the $l$th individual- and population-level factorial effects as 
\begin{equation}
\label{eq:factorial-effects}    
\tau_{il} = 2^{-(K-1)} \bm h_l^\prime \bm Y_i
\;\;
(i = 1, \ldots, N);
\quad
\bar \tau_l = 2^{-(K-1)} \bm h_l^\prime \bm p.
\end{equation}

We provide the following example to illustrate the concepts introduced above.

\begin{example}
\label{example:1}
For $K=3,$ by the construction procedure described in Section 2.1, we obtain
$
\bm h_0 = (1, 1, 1, 1, 1, 1, 1, 1)^\prime,
$
$
\bm h_1 = (-1, -1, -1, -1, 1, 1, 1, 1)^\prime,
$
$
\bm h_2 = (-1, -1, 1, 1, -1, -1, 1, 1)^\prime,
$
and
$
\bm h_3 = (-1, 1, -1, 1, -1, 1, -1, 1)^\prime.
$
Moreover, 
$
\bm h_4 = \bm h_1 \cdot \bm h_2 = (1, 1, -1, -1, -1, -1, 1, 1)^\prime,
$
$
\bm h_5 = \bm h_1 \cdot \bm h_3,
$
$
\bm h_6 = \bm h_2 \cdot \bm h_3,
$
and
$
\bm h_7 = \bm h_1 \cdot \bm h_2 \cdot \bm h_3.
$
Therefore, the design matrix
\begin{equation*}
\bm H =
\bordermatrix{& \bm h_0 & \bm h_1& \bm h_2 & \bm h_3 & \bm h_4 & \bm h_5& \bm h_6 & \bm h_7 \cr
              & 1 & -1 & -1 & -1 &  1 &  1 &  1 & -1 \cr
              & 1 & -1 & -1 &  1 &  1 & -1 & -1 &  1 \cr
              & 1 & -1 &  1 & -1 & -1 &  1 & -1 &  1 \cr
              & 1 & -1 &  1 &  1 & -1 & -1 &  1 & -1 \cr
              & 1 &  1 & -1 & -1 & -1 & -1 &  1 &  1 \cr
              & 1 &  1 & -1 &  1 & -1 &  1 & -1 & -1 \cr
              & 1 &  1 &  1 & -1 &  1 & -1 & -1 & -1 \cr
              & 1 &  1 &  1 &  1 &  1 &  1 &  1 &  1
              },
\end{equation*}
and the treatment combinations are 
$$
\bm z_1=(-1, -1, -1),
\quad
\bm z_2=(-1, -1, 1),
\quad
\bm z_3=(-1, 1, -1),
\quad
\bm z_4=(-1, 1, 1),
$$
and
$$
\bm z_5=(1, -1, -1),
\quad
\bm z_6=(1, -1, 1),
\quad
\bm z_7=(1, 1, -1),
\quad
\bm z_8=(1, 1, 1),
$$
respectively. For illustration we consider the main effect of the first treatment factor. First, on the individual level, by \eqref{eq:factorial-effects}    
$$
\tau_{i1} 
= 
\frac{1}{4}\sum_{j=1}^4 Y_i(\bm z_j) 
- 
\frac{1}{4}\sum_{j=5}^8 Y_i(\bm z_j),
$$
which is indeed difference between the average potential outcome of unit $i$ where the first treatment factor is $-1,$ and the one where the first treatment factor is $+1.$ Second, 
\begin{align*}
\bar \tau_{i1} 
&= \frac{1}{N} \sum_{i=1}^N \tau_{i1} \\
&= \frac{1}{4N} 
\left\{
\sum_{i=1}^N \sum_{j=1}^4 Y_i(\bm z_j) - \sum_{i=1}^N \sum_{j=5}^8 Y_i(\bm z_j)
\right\}
 \\
&= \frac{1}{4N} 
\left\{
\sum_{j=1}^4 N_j - \sum_{j^\prime = 5}^8 N_{j^\prime}
\right\}.
\end{align*}
\end{example}

\subsection{Neymanian inference}

We consider a completely randomized treatment assignment. Let $n_1, \ldots, n_J$ be positive constants such that
$
\sum n_j = N.
$
For all $j = 1, \ldots, J,$ randomly assign $n_j \ge 2$ units to $\bm z_j.$ For all $i = 1, \ldots, N,$ we let
$$
W_i(\bm z_j) = 
\begin{cases}
1, \quad \text{if unit $i$ is assigned to $\bm z_j$,}\\
0, \quad \text{otherwise}
\end{cases}
\quad
(j = 1, \ldots, J).
$$
The observed outcomes for unit $i$ is therefore
$
Y_i^\textrm{obs} = \sum_{j=1}^J W_i(\bm z_j) Y_i(\bm z_j).
$
Let the average observed potential outcome for $\bm z_j$ be
$
\hat p_j = n_j^\mathrm{obs} / n_j,
$
where
\begin{equation*}
n_j^\textrm{obs} 
= \sum_{i=1}^N W_i(\bm z_j)Y_i(\bm z_j)
= \sum_{i: W_i(\bm z_j) = 1} Y_i^\textrm{obs}
\quad
(j = 1, \ldots, J).
\end{equation*}
Denote
$
\hat{\bm p}
= 
(
\hat p_1, 
\ldots,
\hat p_J
)^\prime.
$
An unbiased estimator of the factorial effect $\bar \tau_l$ is
\begin{equation}
\label{eq:factorial-effects-estimator}
\hat {\bar \tau}_l 
=  
2^{-(K-1)} \bm h_l^\prime \hat{\bm p}
\quad
(l = 1, \ldots, J-1).
\end{equation}
\cite{Dasgupta:2015} derived the sampling variance of the estimator in \eqref{eq:factorial-effects-estimator} as
\begin{equation}
\label{eq:factorial-effects-variance}
\mathrm{Var}(\hat {\bar \tau}_l) = \frac{1}{2^{2(K-1)}} \sum_{j=1}^J S_j^2 / n_j - \frac{1}{N} S^2(\bar \tau_l),
\end{equation}
where 
\begin{equation*}
S_j^2 
= 
(N-1)^{-1}
\sum_{i=1}^N 
\left\{ 
Y_i(\bm z_j) - \bar Y(\bm z_j) 
\right\}^2 
=
\frac{N}{N-1} p_j (1 - p_j)
\end{equation*}
is the variance of potential outcomes for $\bm z_j$, and 
$$
S^2(\bar \tau_l) 
= 
(N-1)^{-1}
\sum_{i=1}^N 
(\tau_{il} - \bar \tau_l)^2
$$
is the variance of the $l$th individual-level factorial effects. To estimate the sampling variance \eqref{eq:factorial-effects-variance}, \cite{Dasgupta:2015} substituted $S_j^2$ with its unbiased estimate
\begin{equation*}
s_j^2 
= 
(n_j - 1)^{-1} 
\sum_{i=1}^N
W_i(\bm z_j)
\{
Y_i^{\textrm{obs}} - \bar Y^{\textrm{obs}}(\bm z_j) 
\}^2
=
\frac{n_j}{n_j - 1} \hat p_j (1 - \hat p_j),
\end{equation*}
and plugged in the lower bound of zero for 
$
S^2(\bar \tau_l). 
$
The resulted Neymanian estimator
\begin{equation}
\label{eq:factorial-effects-variance-estimator}
\widehat{\mathrm{Var}}_{\mathrm{Ney}}(\hat {\bar \tau}_l) 
= 2^{-2(K-1)} \sum_{j=1}^J s_j^2 / n_j
= 2^{-2(K-1)} \sum_{j=1}^J \frac{\hat p_j (1 - \hat p_j)}{n_j - 1}
\end{equation}
is ``conservative,'' on average over-estimating the true sampling variance by
$$
\mathrm{E} 
\left\{ 
\widehat{\mathrm{Var}}_{\mathrm{Ney}}(\hat {\bar \tau}_l) 
\right\} 
- \mathrm{Var}(\hat{\bar \tau}_l) 
=  S^2(\bar \tau_l) / N.
$$
The bias is generally positive, unless strict additivity \citep{Dasgupta:2015, Ding:2016, Ding:2017} holds, i.e., 
$
\tau_{il} = \tau_{i^\prime l}
$
for all
$
i \ne i^\prime.
$
In other words, all experimental units have identical treatment effects. Several researchers \citep[e.g.,][]{Lavange:2005, Rigdon:2015} pointed out that this condition is too strong in practice, especially for binary outcomes. In cases where strict additivity does not hold, the estimator in \eqref{eq:factorial-effects-variance-estimator} might be too conservative, as acknowledged by \cite{Aronow:2014}.

\section{The improved Neymanian variance estimator}
\label{sec:theory}

The key to the partial identification approach is to derive a non-zero lower bound of $S^2(\bar \tau_l).$ To achieve this goal, we rely on the following lemmas, which are ``$2^K$ versions'' of the corresponding ``$2^2$ versions'' in \cite{Lu:2017}. However, it is worth mentioning that, the original proofs in \cite{Lu:2017} relied on the inclusion-exclusion principle and Boole's inequality, and therefore are difficult to be generalized to arbitrary $2^K$ factorial designs. To partially circumvent this technical difficulty, we adopt a methodology that is simpler and more intuitive than the one used by \cite{Lu:2017}. 

\begin{lemma}
\label{lemma:variance-taul-formula}
For all $l=1, \ldots, J-1,$ let 
$
\bm h_l = (h_{1l}, \ldots, h_{Jl})^\prime,
$
and
\begin{equation*}
S^2(\bar \tau_l) 
= 
\frac{1}{2^{2(K-1)}(N-1)} 
\left(
\sum_{j=1}^J N_j
+
\sum_{j \ne j^\prime} h_{lj}h_{lj^\prime} N_{jj^\prime}
\right)
-
\frac{N}{N-1} \bar \tau_l^2.
\end{equation*}
\end{lemma}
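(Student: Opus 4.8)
The plan is to reduce the claim to a direct algebraic expansion, exploiting the two binary structures in play: the entries of $\bm h_l$ take values in $\{-1,1\}$, so $h_{lj}^2 = 1$, and the potential outcomes are binary, so $Y_i(\bm z_j)^2 = Y_i(\bm z_j)$. These two identities are exactly what make the computation collapse cleanly, and they are the reason this route sidesteps the inclusion–exclusion and Boole's inequality machinery of \cite{Lu:2017}.

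First I would rewrite the centered sum of squares defining $S^2(\bar\tau_l)$ by expanding the quadratic and using the identity $\sum_{i=1}^N \tau_{il} = N\bar\tau_l$, which follows immediately from $\sum_{i=1}^N \bm Y_i = N\bm p$ together with the definitions in \eqref{eq:factorial-effects}. The cross term then cancels against one copy of the mean-square term, turning the centered sum into a raw second moment minus a correction:
$$
S^2(\bar\tau_l) = (N-1)^{-1}\sum_{i=1}^N \tau_{il}^2 - \frac{N}{N-1}\,\bar\tau_l^2 .
$$
The whole problem then reduces to evaluating $\sum_{i=1}^N \tau_{il}^2$.

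Next I would expand $\tau_{il}^2 = 2^{-2(K-1)}(\bm h_l^\prime \bm Y_i)^2$ by splitting the quadratic form into its diagonal and off-diagonal parts,
$$
(\bm h_l^\prime \bm Y_i)^2 = \sum_{j=1}^J h_{lj}^2\, Y_i(\bm z_j)^2 + \sum_{j\ne j^\prime} h_{lj}h_{lj^\prime}\, Y_i(\bm z_j)Y_i(\bm z_{j^\prime}).
$$
Applying $h_{lj}^2 = 1$ and $Y_i(\bm z_j)^2 = Y_i(\bm z_j)$ collapses the diagonal part to $\sum_{j=1}^J Y_i(\bm z_j)$. Summing over $i$ and reading off the definitions $N_j = \sum_i Y_i(\bm z_j)$ and $N_{jj^\prime} = \sum_i Y_i(\bm z_j)Y_i(\bm z_{j^\prime})$ then yields $\sum_{i=1}^N \tau_{il}^2 = 2^{-2(K-1)}\bigl(\sum_{j=1}^J N_j + \sum_{j\ne j^\prime} h_{lj}h_{lj^\prime} N_{jj^\prime}\bigr)$. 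Substituting this back into the displayed reduction gives the claimed formula.

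I do not expect a serious obstacle here; the argument is a mechanical consequence of the two squaring identities. The only points requiring a little care are the bookkeeping in the off-diagonal sum — checking that the symmetric, unordered-pair quantity $N_{jj^\prime}$ is summed over ordered pairs consistently so that no spurious factor of $2$ appears — and verifying that the cross term produced by the centering step cancels exactly against one copy of $N\bar\tau_l^2$. Both are routine once the two binary reductions are in hand.
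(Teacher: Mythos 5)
Your proposal is correct and follows essentially the same route as the paper's own proof: expand $(\bm h_l^\prime \bm Y_i)^2$ into diagonal and off-diagonal parts, collapse the diagonal using $h_{lj}^2=1$ and $Y_i(\bm z_j)^2=Y_i(\bm z_j)$, sum over $i$ to recognize $N_j$ and $N_{jj^\prime}$, and combine with the standard decomposition $S^2(\bar\tau_l)=(N-1)^{-1}\bigl(\sum_i\tau_{il}^2-N\bar\tau_l^2\bigr)$. The only difference is order of presentation (you center first, the paper expands first), and your bookkeeping point about the ordered-pair sum $\sum_{j\ne j^\prime}$ is handled identically in the paper.
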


\begin{proof}
The proof largely follows \cite{Lu:2017}. First, by \eqref{eq:factorial-effects}
\begin{align*}
\sum_{i=1}^N \tau_{il}^2 
& = 2^{-2(K-1)} \sum_{i=1}^N (\bm h_l^\prime \bm Y_i )^2 \\
& = 2^{-2(K-1)} \sum_{i=1}^N 
\left\{
\sum_{j=1}^J h_{lj}Y_i(\bm z_j)
\right\}^2 \\
& = 2^{-2(K-1)} \sum_{i=1}^N 
\left\{
\sum_{j=1}^J h_{lj}^2 Y^2_i(\bm z_j)
+
\sum_{j \ne j^\prime} h_{lj}h_{lj^\prime} Y_i(\bm z_j)Y_i(\bm z_{j^\prime})
\right\} \\
& = 2^{-2(K-1)}
\left\{
\sum_{j=1}^J h_{lj}^2 \sum_{i=1}^N  Y^2_i(\bm z_j)
+
\sum_{j \ne j^\prime} h_{lj}h_{lj^\prime} \sum_{i=1}^N Y_i(\bm z_j)Y_i(\bm z_{j^\prime})
\right\} \\
& = 2^{-2(K-1)} 
\left(
\sum_{j=1}^J N_j
+
\sum_{j \ne j^\prime} h_{lj}h_{lj^\prime} N_{jj^\prime}
\right).
\end{align*}
Therefore
\begin{align*}
S^2(\bar \tau_l) 
& = 
(N-1)^{-1}
\left(
\sum_{i=1}^N \tau_{il}^2 - N \bar \tau_l^2 \right) \\
& = \frac{1}{2^{2(K-1)}(N-1)} 
\left(
\sum_{j=1}^J N_j
+
\sum_{j \ne j^\prime} h_{lj}h_{lj^\prime} N_{jj^\prime}
\right)
-
\frac{N}{N-1} \bar \tau_l^2,
\end{align*}
which completes the proof.
\end{proof}

\medskip

\begin{lemma}
\label{lemma:ie-inequality}
For all $l = 1, \ldots, J-1,$ \begin{equation}
\label{eq:ie-inequality}
\sum_{j=1}^J N_j
+
\sum_{j \ne j^\prime} h_{lj}h_{lj^\prime} N_{jj^\prime}
\ge
\left|
\sum_{j=1}^J h_{lj} N_l
\right|,
\end{equation}
and the equality in \eqref{eq:ie-inequality} holds if and only if
\begin{equation}
\label{eq:ie-inequality-hold-1}
\tau_{il} 
\left\{ 
\tau_{il} + 2^{-(K-1)} 
\right\} 
= 0
\quad
(\forall i = 1, \ldots, N)
\end{equation}
or
\begin{equation}
\label{eq:ie-inequality-hold-2}
\tau_{il} 
\left\{ 
\tau_{il} - 2^{-(K-1)} 
\right\} 
= 0
\quad
(\forall i = 1, \ldots, N).
\end{equation}
\end{lemma}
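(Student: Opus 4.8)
The plan is to exploit the binary nature of the outcomes to recognize the left-hand side of \eqref{eq:ie-inequality} as a unit-level sum of perfect squares. First I would introduce the unit-level quantity $a_i = \bm h_l^\prime \bm Y_i = \sum_{j=1}^J h_{lj} Y_i(\bm z_j)$, so that $a_i = 2^{K-1}\tau_{il}$ by \eqref{eq:factorial-effects}. Because $Y_i(\bm z_j) \in \{0,1\}$ we have $Y_i(\bm z_j)^2 = Y_i(\bm z_j) = h_{lj}^2 Y_i(\bm z_j)$, so the same expansion used in the proof of Lemma~\ref{lemma:variance-taul-formula} gives the identity
\[
\sum_{j=1}^J N_j + \sum_{j \ne j^\prime} h_{lj} h_{lj^\prime} N_{jj^\prime} = \sum_{i=1}^N \Bigl( \sum_{j=1}^J h_{lj} Y_i(\bm z_j) \Bigr)^2 = \sum_{i=1}^N a_i^2 .
\]
Symmetrically, since $N_j = \sum_i Y_i(\bm z_j)$, the right-hand side rearranges as $\bigl| \sum_{j=1}^J h_{lj} N_j \bigr| = \bigl| \sum_{i=1}^N a_i \bigr|$, so the whole claim \eqref{eq:ie-inequality} reduces to the scalar inequality $\sum_i a_i^2 \ge \bigl| \sum_i a_i \bigr|$.

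The crucial observation is that each $a_i$ is an \emph{integer}: it is the signed count of those $j$ with $Y_i(\bm z_j) = 1$, weighted by $h_{lj} = \pm 1$. Hence $a_i^2 \ge |a_i|$ holds term by term, and I would chain this with the triangle inequality,
\[
\sum_{i=1}^N a_i^2 \ge \sum_{i=1}^N |a_i| \ge \Bigl| \sum_{i=1}^N a_i \Bigr| ,
\]
which is exactly \eqref{eq:ie-inequality}. This is the sense in which the present route is cleaner than the inclusion--exclusion and Boole's inequality argument of \cite{Lu:2017}: once the perfect-square identity is in hand, the inequality direction is immediate and does not depend on the combinatorial structure of $\bm h_l$, so it generalizes to arbitrary $K$ without extra work.

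For the equality characterization I would track when both inequalities in the chain are tight simultaneously. Tightness of the first requires $a_i^2 = |a_i|$ for every $i$, forcing $a_i \in \{-1,0,1\}$; tightness of the triangle inequality requires all nonzero $a_i$ to share a common sign. Together these are equivalent to ``$a_i \in \{0,1\}$ for all $i$'' or ``$a_i \in \{-1,0\}$ for all $i$''. Translating back through $a_i = 2^{K-1}\tau_{il}$, the former condition reads $\tau_{il}\{\tau_{il} - 2^{-(K-1)}\} = 0$ and the latter $\tau_{il}\{\tau_{il} + 2^{-(K-1)}\} = 0$, recovering \eqref{eq:ie-inequality-hold-2} and \eqref{eq:ie-inequality-hold-1} respectively.

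I expect the only genuine subtlety to lie in the equality analysis rather than the inequality. One must argue carefully that the conjunction of the two tightness conditions (pointwise squaring and triangle) collapses precisely to the stated disjunction, noting that the all-zero case lies in the overlap of both \eqref{eq:ie-inequality-hold-1} and \eqref{eq:ie-inequality-hold-2}, and that the integer conditions on $a_i$ transcribe without loss into the quadratic conditions on $\tau_{il}$. Everything else is bookkeeping around the identity established in the first paragraph.
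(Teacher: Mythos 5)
Your proof is correct and rests on the same core fact as the paper's: $a_i = \bm h_l^\prime \bm Y_i = 2^{K-1}\tau_{il}$ is an integer, so $a_i^2 \ge |a_i|$ holds unit by unit; the paper packages this as the two signed per-unit inequalities $(\lambda_{il-}-\lambda_{il+}) + (\lambda_{il-}-\lambda_{il+})^2 \ge 0$ over the partition $\bm J_{l-}, \bm J_{l+}$, whereas you reuse the perfect-square identity already established in the proof of Lemma~\ref{lemma:variance-taul-formula} and chain $\sum_i a_i^2 \ge \sum_i |a_i| \ge |\sum_i a_i|$ via the triangle inequality. Your equality analysis ($a_i \in \{0,1\}$ for all $i$, or $a_i \in \{-1,0\}$ for all $i$) transcribes exactly to \eqref{eq:ie-inequality-hold-2} and \eqref{eq:ie-inequality-hold-1}, matching the paper's conclusion.
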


\begin{proof}
To prove \eqref{eq:ie-inequality}, we break it down into two parts:
\begin{equation}
\label{eq:ie-inequality-1}
\sum_{j=1}^J N_j
+
\sum_{j \ne j^\prime} h_{lj}h_{lj^\prime} N_{jj^\prime}
\ge
\sum_{j=1}^J h_{lj} N_l
\end{equation}
and
\begin{equation}
\label{eq:ie-inequality-2}
\sum_{j=1}^J N_j
+
\sum_{j \ne j^\prime} h_{lj}h_{lj^\prime} N_{jj^\prime}
\ge
- \sum_{j=1}^J h_{lj} N_l.
\end{equation}
Note that
\begin{enumerate}
\item The inequality in \eqref{eq:ie-inequality} holds \emph{if and only if both} the inequalities in \eqref{eq:ie-inequality-1} and \eqref{eq:ie-inequality-2} hold; 

\item The equality in \eqref{eq:ie-inequality} holds \emph{if and only if either} the equalities in both \eqref{eq:ie-inequality-1} and \eqref{eq:ie-inequality-2} holds. 

\end{enumerate}
We first prove \eqref{eq:ie-inequality-1}, and derive the sufficient and necessary condition for the equality to hold. To simplify notations, denote
$$
\bm J_{l-} = \{j: h_{lj} = -1\},
\quad
\bm J_{l+} = \{j: h_{lj} = 1\}.
$$
Simple algebra suggests that \eqref{eq:ie-inequality-1} is equivalent to
\begin{equation}
\label{eq:ie-inequality-1-1}
2\sum_{j \in \bm J_{l-}} N_j 
+ 
\sum_{j, j^\prime \in \bm J_{l-}; j \ne j^\prime} N_{jj^\prime}    
+ 
\sum_{j, j^\prime \in \bm J_{l+}; j \ne j^\prime} N_{jj^\prime}    
\ge
2\sum_{j \in \bm J_{l-}, j^\prime \in \bm J_{l+}} N_{jj^\prime}.   
\end{equation}
To prove \eqref{eq:ie-inequality-1-1}, for all $i = 1, \ldots, N,$ we let
$
\lambda_{il-} = \sum_{j \in \bm J_{l-}} Y_i(\bm z_j)
$
and 
$
\lambda_{il+} = \sum_{j^\prime \in \bm J_{l+}} Y_i(\bm z_{j^\prime}),
$
which are two integer constants. Therefore, for
$
i=1, \ldots, N,
$
it is obvious that 
$
(\lambda_{il-} - \lambda_{il+}) +
(\lambda_{il-} - \lambda_{il+})^2
\ge 0,
$
or equivalently
\begin{equation}
\label{eq:ie-inequality-1-2}
2\lambda_{il-} 
+
\lambda_{il-}(\lambda_{il-} - 1)
+
\lambda_{il+}(\lambda_{il+} - 1)
\ge 
2\lambda_{il-}\lambda_{il+}.
\end{equation}
Note that \eqref{eq:ie-inequality-1-2} immediately implies \eqref{eq:ie-inequality-1}, because
\begin{equation*}
\sum_{j \in \bm J_{l-}} N_j = \sum_{i=1}^N \lambda_{il-},
\quad
\sum_{j \in \bm J_{l-}, j^\prime \in \bm J_{l+}} N_{jj^\prime} = \sum_{i=1}^N \lambda_{il-}\lambda_{il+},
\end{equation*}
and
\begin{equation*}
\sum_{j, j^\prime \in \bm J_{ls}; j \ne j^\prime} N_{jj^\prime} = \sum_{i=1}^N \lambda_{ils}(\lambda_{ils} - 1)  
\quad
(s = -, +).   
\end{equation*}
Moreover, the equality in \eqref{eq:ie-inequality-1} holds if and only if the equality in \eqref{eq:ie-inequality-1-2} holds for all $i=1, \ldots, N,$ which is equivalent to \eqref{eq:ie-inequality-hold-1}, because by definition
$
\lambda_{il-} - \lambda_{il+} = 2^{K-1} \tau_{il}.
$

Similarly, we can prove \eqref{eq:ie-inequality-2}, and its equality holds if and only if \eqref{eq:ie-inequality-hold-2} holds.
\end{proof}

With the help of Lemmas \ref{lemma:variance-taul-formula} and \ref{lemma:ie-inequality}, along with the definition of factorial effect in \eqref{eq:factorial-effects}, we can derive the main theoretical result of the paper.

\begin{theorem}
\label{thm:variance-taul-lower-bound}
The sharp lower bound for 
$
S^2(\bar \tau_l)
$
is 
\begin{equation}
\label{eq:variance-taul-lower-bound}
S^2(\bar \tau_l)
\ge
\frac{N}{N-1}
\max 
\{
2^{-(K-1)}|\bar \tau_l| - \bar \tau_l^2, 0
\}.
\end{equation}
The equality in \eqref{eq:variance-taul-lower-bound} holds if and only if \eqref{eq:ie-inequality-hold-1} or \eqref{eq:ie-inequality-hold-2} holds.
\end{theorem}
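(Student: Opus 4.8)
The plan is to obtain \eqref{eq:variance-taul-lower-bound} by directly chaining Lemma~\ref{lemma:variance-taul-formula} and Lemma~\ref{lemma:ie-inequality}, and then to read off the equality condition from the characterization already supplied by Lemma~\ref{lemma:ie-inequality}. The one auxiliary identity I would record first is
\begin{equation*}
\sum_{j=1}^J h_{lj} N_j
= \sum_{j=1}^J h_{lj} \sum_{i=1}^N Y_i(\bm z_j)
= \sum_{i=1}^N \bm h_l^\prime \bm Y_i
= 2^{K-1} \sum_{i=1}^N \tau_{il}
= 2^{K-1} N \bar\tau_l,
\end{equation*}
which follows from $N_j = \sum_i Y_i(\bm z_j)$ together with the definition \eqref{eq:factorial-effects}. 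This identifies the right-hand side of \eqref{eq:ie-inequality} as $2^{K-1} N |\bar\tau_l|$, so Lemma~\ref{lemma:ie-inequality} is precisely a lower bound on the bracketed quantity $\sum_j N_j + \sum_{j \ne j^\prime} h_{lj}h_{lj^\prime} N_{jj^\prime}$ appearing in Lemma~\ref{lemma:variance-taul-formula}.

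Substituting that lower bound into the exact formula of Lemma~\ref{lemma:variance-taul-formula} then gives
\begin{equation*}
S^2(\bar\tau_l)
\ge
\frac{2^{K-1} N |\bar\tau_l|}{2^{2(K-1)}(N-1)} - \frac{N}{N-1}\bar\tau_l^2
= \frac{N}{N-1}\left\{ 2^{-(K-1)}|\bar\tau_l| - \bar\tau_l^2 \right\}.
\end{equation*}
Since $S^2(\bar\tau_l)$ is a sample variance it is simultaneously bounded below by $0 = \tfrac{N}{N-1}\cdot 0$; taking the larger of these two valid lower bounds produces exactly \eqref{eq:variance-taul-lower-bound}.

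For the equality claim, the point is that the sole inequality invoked is that of Lemma~\ref{lemma:ie-inequality}, whose equality case is already known to be \eqref{eq:ie-inequality-hold-1} or \eqref{eq:ie-inequality-hold-2}. I would first dispatch the \emph{if} direction: if \eqref{eq:ie-inequality-hold-2} holds then each $\tau_{il} \in \{0, 2^{-(K-1)}\}$, so writing $q$ for the fraction of units with $\tau_{il} = 2^{-(K-1)}$ gives $\bar\tau_l = 2^{-(K-1)} q$ and hence $2^{-(K-1)}|\bar\tau_l| - \bar\tau_l^2 = 2^{-2(K-1)} q (1-q) \ge 0$; the symmetric computation handles \eqref{eq:ie-inequality-hold-1}. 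Thus under either condition the nonnegative branch of the maximum is the active one, and the equality of Lemma~\ref{lemma:ie-inequality} propagates to equality in \eqref{eq:variance-taul-lower-bound}; conversely, when $2^{-(K-1)}|\bar\tau_l| - \bar\tau_l^2 \ge 0$ the maximum coincides with the nontrivial branch, so equality in \eqref{eq:variance-taul-lower-bound} forces equality in Lemma~\ref{lemma:ie-inequality} and the stated characterization follows. This same equality case also certifies that the bound is \emph{sharp}, i.e.\ attainable, since it exhibits explicit potential-outcome configurations that meet it.

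The step I expect to demand the most care is the bookkeeping around the floor $\max\{\cdot,0\}$ rather than any individual inequality. The characterization ``equality iff \eqref{eq:ie-inequality-hold-1} or \eqref{eq:ie-inequality-hold-2}'' is cleanest for the \emph{unfloored} bound $S^2(\bar\tau_l) \ge \tfrac{N}{N-1}\{2^{-(K-1)}|\bar\tau_l| - \bar\tau_l^2\}$, and once the maximum with $0$ is imposed one must confirm that the degenerate regime $2^{-(K-1)}|\bar\tau_l| - \bar\tau_l^2 < 0$ is treated consistently. The computation in the previous paragraph is what resolves this: it confines \eqref{eq:ie-inequality-hold-1} and \eqref{eq:ie-inequality-hold-2} to the regime in which the nontrivial branch is nonnegative, so the characterization should be understood as pertaining to the informative part of the bound. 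Getting this alignment between the two branches of the maximum and the equality set stated exactly right, rather than proving the inequality itself, is the genuinely delicate piece of the argument.
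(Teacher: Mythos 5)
Your proposal is correct and follows essentially the same route as the paper: chain Lemma~\ref{lemma:variance-taul-formula} with Lemma~\ref{lemma:ie-inequality}, identify $\bigl|\sum_{j=1}^J h_{lj} N_j\bigr|$ with $2^{K-1} N |\bar\tau_l|$ via \eqref{eq:factorial-effects}, and inherit the equality characterization from Lemma~\ref{lemma:ie-inequality}. Your additional bookkeeping around the $\max\{\cdot,0\}$ branch and the verification that \eqref{eq:ie-inequality-hold-1} and \eqref{eq:ie-inequality-hold-2} force the nontrivial branch to be nonnegative is more explicit than the paper's own (very terse) proof, but it is the same argument.
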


\begin{proof}
By Lemmas \ref{lemma:variance-taul-formula} and \ref{lemma:ie-inequality},
$$
S^2(\bar \tau_l) 
\ge
\frac{1}{2^{2(K-1)}(N-1)} 
\left|
\sum_{j=1}^J h_{lj} N_l
\right|
-
\frac{N}{N-1} \bar \tau_l^2.
$$
Moreover, by \eqref{eq:factorial-effects},
$$
N 2^{K-1} |\bar \tau_l| 
= 
N 
\left|
\sum_{j=1}^J h_{lj} p_j 
\right|
= 
\left|
\sum_{j=1}^J h_{lj} N_j
\right|,
$$
which completes the proof.
\end{proof}

\medskip

The lower bound in \eqref{eq:variance-taul-lower-bound} is ``sharp,'' in the sense that it is the minimum of all possible values of
$
S^2(\bar \tau_l)
$
compatible with the marginal distributions of the potential outcomes (i.e., $N_j$ for $j = 1, \ldots, J$). To facilitate a better understanding of Theorem \ref{thm:variance-taul-lower-bound}, we consider two special cases. First, when $K=1,$ we have the classic treatment-control studies. In this case, Theorem \ref{thm:variance-taul-lower-bound} reduces to the main result of \cite{Ding:2016}. Moreover, the condition in \eqref{eq:ie-inequality-hold-1} reduces to
$$
Y_i(1) = Y_i(-1) \;\: \mathrm{or} \;\: Y_i(-1) - 1
\quad
(\forall i=1, \ldots, N),
$$
or equivalently
$$
Y_i(1) \le Y_i(-1)
\quad
(i=1, \ldots, N),
$$
because $Y_i(1)$ and $Y_i(0)$ are both binary. Similarly, \eqref{eq:ie-inequality-hold-1} is equivalent to
$$
Y_i(1) \ge Y_i(-1)
\quad
(i=1, \ldots, N).
$$
The above two conditions are termed \emph{monotonicity} by \cite{Ding:2016}. Second, when $K=2,$ Theorem \ref{thm:variance-taul-lower-bound} reduces to the main result of \cite{Lu:2017}. 

We illustrate the results in Theorem \ref{thm:variance-taul-lower-bound} by the following numerical example. 

\begin{example}
\label{example:2}
We let
$$
S^2_\mathrm{lb}(\bar \tau_l) 
=
\frac{N}{N-1}
\max 
\{
2^{-(K-1)}|\bar \tau_l| - \bar \tau_l^2, 0
\}
$$
denote the lower bound for 
$
S^2(\bar \tau_l)
$
derived in Theorem \ref{thm:variance-taul-lower-bound}. We consider a balanced $2^3$ factorial design with 40 experimental units, where
\begin{itemize}

\item Case 1: For unit $i=1, \ldots, 20,$ let $\bm Y_i = (1, 1, 0, 0, 0, 0, 0, 1)$ and therefore $\tau_{i1} =-0.25.$ For unit $i=21, \ldots, 40,$ let $\bm Y_i = (0, 0, 1, 1, 1, 1, 0, 0)$ and therefore $\tau_{i1} =0.$ Consequently,
$
S^2(\bar \tau_1) = S^2_\mathrm{lb}(\bar \tau_1) = 0.016;
$

\item Case 2: For unit $i=1, \ldots, 36,$ let $\bm Y_i = (1, 1, 0, 0, 0, 0, 0, 1)$ and therefore $\tau_{i1} =-0.25.$ For unit $i=37, \ldots, 40,$ let $\bm Y_i = (0, 0, 1, 1, 1, 1, 0, 0)$ and therefore $\tau_{i1} =0.$ Consequently,
$
S^2(\bar \tau_1) = S^2_\mathrm{lb}(\bar \tau_1) = 0.0058.
$
\end{itemize}
We make two observations from the above examples. First, in both cases the lower bound $S^2_\mathrm{lb}(\bar \tau_1)$ is sharp. In other words, we can perfectly identify $S^2(\bar \tau_1),$ the heterogeneity in the individual factorial effects $\tau_{i1}$'s. This is because condition \eqref{eq:ie-inequality-hold-1} holds. Second, the extent to which we can improve upon the Neymanian variance estimator depends on $S^2(\bar \tau_1).$ Indeed, the larger the heterogeneity is, the larger the improvement can be.

\end{example}

Theorem \ref{thm:variance-taul-lower-bound} leads to the ``improved'' Neymanian variance estimator
\begin{equation}
\label{eq:factorial-effects-variance-estimator-improved}
\widehat{\mathrm{Var}}_{\mathrm{IN}}(\hat {\bar \tau}_l) 
= 
\widehat{\mathrm{Var}}_{\mathrm{Ney}}(\hat {\bar \tau}_l) 
-
\frac{1}{N-1}
\max 
\{
2^{-(K-1)}|\hat{\bar \tau}_l| - \hat{\bar \tau}_l^2, 0
\}
.
\end{equation}
This bias-correction term on the right hand side of \eqref{eq:factorial-effects-variance-estimator-improved} is always non-negative, implying a guaranteed improvement of variance estimation, for any observed data-set. 

\section{Concluding remarks}
\label{sec:conclusion}

Under the potential outcomes framework, we have proposed an ``improved'' Neymanian variance estimator for $2^K$ factorial designs with binary outcomes. Comparing to the classic variance estimator by \cite{Dasgupta:2015}, the newly proposed estimator guarantees bias-correction, regardless of the underlying dependency structure of the potential outcomes. The core idea behind the new estimator is the sharp lower bound of the sampling variance of the estimated factorial effects. 

We point out two directions of future research. First, although we focus on binary outcomes, it would be interesting to generalize the current work to general outcomes (e.g., continuous, time to event). The proof of Lemma \ref{lemma:variance-taul-formula} suggest that the key is to sharply bound
$
\sum_{j \ne j^\prime} h_{lj}h_{lj^\prime} Y_i(\bm z_j)Y_i(\bm z_{j^\prime})
$
For $K=1,$ \cite{Aronow:2014} solved this problem by using the arrangement inequality \citep{Hardy:1988}. However, generalizing their results to factorial designs seems non-trivial, because there is no ``multivariate'' rearrangement inequality readily available, to the best our of knowledge. Second, in a recent paper \cite{Mukerjee:2018} extended the potential outcomes framework to more complex experimental designs beyond $2^K$ factorial (e.g., Latin square and split-plot), and it is possible to study partial identification for those scenarios. 

\section*{Acknowledgement}

The author is grateful to the Editor and three anonymous reviewers for their valuable comments, which helped improve the quality and presentation of this paper significantly. The author thanks Prof. Tirthankar Dasgupta at Rutgers, Prof. Peng Ding at Berkeley and Dr. Yixuan Qiu at Carnegie Mellon, for insightful discussions.

\bibliographystyle{apalike}
\bibliography{factorial_binary}

\end{document}